\newtheorem{theorem}{Theorem}[section]
\newtheorem{lemma}[theorem]{Lemma}
\numberwithin{equation}{section}
\newcommand{\qed}{\rule{7pt}{7pt}}
\newenvironment{proof}{\noindent{\bf Proof}\hspace*{1em}}{\hfill\qed\vspace{0.125in}}
\newcommand{\x}{\mathbf{x}}
\newcommand{\y}{\mathbf{y}}
\newcommand{\w}{\mathbf{w}}
\newcommand{\e}{\mathbf{e}}
\title{TOEPLITZ MATRIX BASED SPARSE ERROR CORRECTION IN SYSTEM IDENTIFICATION:
OUTLIERS AND RANDOM NOISES}
\name{Weiyu Xu, Er-Wei Bai and Myung Cho} 
\address{Department of ECE, University of Iowa}
\begin{document}
%

\maketitle

\begin{abstract}
In this paper, we consider robust system identification under sparse
outliers and random noises. In our problem, system parameters are observed through a
Toeplitz matrix. All observations are subject to random noises and a few are
corrupted with outliers. We reduce this problem of system identification to a sparse
error correcting problem using a Toeplitz structured real-numbered coding matrix.
We prove the performance guarantee of Toeplitz structured matrix in sparse error
correction. Thresholds on the percentage of correctable errors for Toeplitz structured
matrices are also established. When both outliers and observation noise are
present, we have shown that the estimation error goes to $0$ asymptotically as long as
the probability density function for observation noise is not ``vanishing'' around $0$.
\end{abstract}

\begin{keywords}
system identification, $\ell_1$ minimization, Toeplitz matrix, compressed sensing, error correction
\end{keywords}


\section{Introduction}
In system identification, an unknown system state $\x \in R^m$ is often observed through a Toeplitz matrix $H \in R^{n \times m}$ ($n \geq m$), namely
\begin{equation*}
\y=H\x,
\end{equation*}
where $\y=(y_1,y_2,...,y_n)^T$ is the system output and
the Toeplitz matrix $H$ is equal to
\begin{equation}
\label{Top}
          \begin{bmatrix}
                h_{-m+2} & h_{-m+3} & \ldots & h_{1}\\
                h_{-m+3} & h_{-m+4} & \ldots & h_{2}\\
                 \vdots& \ldots & \ldots & \vdots  \\
                 \vdots& \ldots & \ldots & \vdots  \\
                 \vdots& \ldots & \ldots & \vdots  \\
                                h_{-m+n+1} & \ldots & \ldots & h_{n}
\end{bmatrix},
\end{equation}
with $h_i$, $-m+2\leq i \leq n$, being the system input assumed to be
an i.i.d. $N(0,1)$ Gaussian random sequence.

If there is no interference or noise in the observation $\y$, one can then simply
recover $\x$ from a matrix inversion. However, in applications, all
observations $\y$ are corrupted by noises and a few elements can be exposed to
large-magnitude gross errors or outliers. Such outliers can happen with the failure of measurement devices, measurement
communication errors and the interference of adversary parties.
 Mathematically, when both additive observation noise and outliers are present,
the observation $\y$ can be written as
\begin{equation}
\y=H\x+\e+\w,
\label{eq:model}
\end{equation}
where $\e$ is a sparse outlier vector with $k \ll n$ nonzero elements, and
$\w$ is a measurement noise vector with each element being i.i.d. random variables. We further assume $m$ is fixed, which is often the case in system identifications \cite{Ljung}.

If only random measurement errors are present, the least-square solutions generally
provide an asymptotically good estimate. However, the least-square estimate breaks down
in the presence of outliers.
Thus, it is necessary to protect the estimates from both random noise and outliers.
Research along this direction has attracted a significant amount of attention,
for example, \cite{Bai, cook, Neter, Rousseeuw, Ljung, Stoica}.
In particular for reducing the effects of outliers,
the least absolute deviation estimate ($\ell_1$ minimization)
was proposed and studied \cite{SIC,CandesErrorCorrection,CT1, DMTSTOC, CDC2011}. Instead of searching for all the $\binom{n}{k}$ possibilities for the locations of outliers,  \cite{SIC, CandesErrorCorrection, CT1} proposed to minimize the least absolute deviation:
\begin{eqnarray}
\min_{\x}  &&\|\y-H\x\|_{1}.
\label{eq:errorcorrection}
\end{eqnarray}
Under the assumption that the error $\e+\w$ is an i.i.d.
random sequence with a common density which has median zero and is continuous
and positive in the neighborhood of zero, the difference between the unknown $\x$
and its estimate is asymptotically Gaussian of zero mean \cite{SIC}. The problem is that
the assumption of a common density on the outliers is seldom satisfied in reality.
Also, median zero on $\e+\w$ is restrictive.

In \cite{CandesErrorCorrection, CT1, DMTSTOC, RudelsonVershynin, StojnicThresholds,XuHassibi,CDC2011}, each element
of $H$ (or the nonsingular $(n-m) \times n$ matrix $A$ such that $AH=0$) is assumed to be i.i.d. random variables following a certain
distribution, for example, Gaussian distribution or Bernoulli distribution.
These types of matrices have been shown to obey certain conditions such as restricted isometry conditions \cite{CandesErrorCorrection} so that (\ref{eq:errorcorrection}) can correctly recover $\x$ when there are only outliers present; and can recover $\x$ approximately when both outliers and
measurement noise exist. However, in the system identification problem, $H$ has
a natural Toeplitz structure and the elements of $H$ are correlated. The natural
question is whether (\ref{eq:errorcorrection}) also provides performance
guarantee for recovering $\x$ with a Toeplitz matrix. We provide a positive
answer in this paper.

Though the elements of Toeplitz matrices are correlated, we have shown that Toeplitz structured matrices also enable the successful recovery of $\x$ by using (\ref{eq:errorcorrection}). The main contribution of this paper is the establishment of the performance guarantee of Toeplitz structured matrices in sparse error correction. In particular, we calculated the thresholds on the sparsity $k$ such that an error vector with no more than $k$ nonzero elements can be recovered using (\ref{eq:errorcorrection}).  When both outliers and observation noise are
present, we have shown that the estimation error goes to $0$ asymptotically as long as
the probability density function for observation noise is not ``vanishing'' around $0$.

There is a well known duality between compressed sensing \cite{Neighborlypolytope, DonohoTanner} and sparse error detection \cite{CandesErrorCorrection, CT1}: the null space of sensing matrices in compressed sensing corresponds to the tall matrix $H$ in sparse error corrections. Toeplitz and circulant matrices have been studied in compressed sensing in several papers \cite{BajwaToeplitz}\cite{randomconvolution}\cite{Rauhut}. In these papers, it has been shown that Toeplitz matrices are good for recovering sparse vectors from undersampled measurements. In contrast, in our model of sparse error correction, the signal itself is \emph{not} sparse and the linear system involved is overdetermined rather underdetermined. Also, the null space of a Toeplitz matrix does not necessarily
correspond to another Toeplitz matrix; so the problem studied in this paper is essentially different from those studied in \cite{BajwaToeplitz}\cite{randomconvolution}\cite{Rauhut}.

The rest of this paper is organized as follows.  In Section \ref{sec:strongthreshold}, we derive performance bounds on the number of outliers we can correct when only outliers are present. In Section \ref{sec:noiseerrortogether}, we derive the estimation of system parameters when both gross errors and observation noises are present. In Section \ref{sec:numerical}, we provide the numerical results and conclude our paper by discussing extensions and future directions.

\section{With Only Outliers}
\label{sec:strongthreshold}
We establish one main result regarding the threshold of successful recovery of $\ell_1$-minimization using Toeplitz matrix.
\begin{theorem}
\label{thm:strongmain}
Let $H$ be an $n \times m$ Toeplitz matrix as in (\ref{Top}), where $m$ is a fixed positive integer and $h_i$, $-m+2 \leq i \leq n$ are i.i.d. $N(0,1)$ Gaussian random variables. Suppose that $\y=H\x+\e$, where $\e$ is a sparse vector with no more than $k$ nonzero elements. Then there exists a constant $c_1 >0$ and a constant $\beta>0$ such that, with probability $1-e^{-c_1n}$ as $n \rightarrow \infty$, the $n \times m$ Toeplitz matrix
$H$  has the following property: for every $\x \in R^m$ and every error $\e$ with its support $K$ satisfying $|K|=k \leq \beta n$, $\x$ is the unique solution to (\ref{eq:errorcorrection}). Here the constant $0<\beta<1$ can be taken as any number such that for some constant $\mu>0$ and $0<\delta<1$,
$\beta \log(1/\beta)+(1-\beta) \log(\frac{1}{1-\beta})+m\beta [\log(2)+\frac{m\mu^2}{2}+ \log(\Phi(\mu \sqrt{m}))]+(\frac{1}{2m-1}-\beta) [\log(2)+\frac{1}{2}\mu^2(1-\delta)^2+\log(1-\Phi(\mu(1-\delta)))]<0$, where $\Phi(t)=\frac{1}{\sqrt{2\pi}}\int_{-\infty}^{t}{e^{-\frac{x^2}{2}}\,dx}$ is the cumulative distribution function for the standard Gaussian random variable.
\end{theorem}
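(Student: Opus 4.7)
The plan is to reduce the statement to a null-space property (NSP) for Toeplitz matrices and then to verify it through a Chernoff-plus-net argument that exploits the block independence of Toeplitz rows. Writing $\z = \x - \x^\star$ with $\x^\star$ the true system state, $\x^\star$ is the unique $\ell_1$ minimizer in (\ref{eq:errorcorrection}) if and only if
\begin{equation*}
\sum_{i \in K}|(H\z)_i| < \sum_{i \notin K}|(H\z)_i|
\end{equation*}
for every $\z \in \R^m\setminus\{\bfzero\}$ and every support $K$ with $|K| \leq k$. By positive homogeneity it suffices to check this on the unit sphere $S^{m-1}$, and because $m$ is fixed, $S^{m-1}$ admits a $\delta$-net of constant cardinality; the passage from the net back to the full sphere accounts for the slack $1-\delta$ that will appear inside the clean-side Chernoff parameter.

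The key step is decoupling the Toeplitz correlations. Since row $i$ of $H\z$ depends only on the window $h_{i-m+1},\ldots,h_i$, any two rows whose indices differ by at least $m$ are independent. I would therefore work with $L = \lfloor n/(2m-1)\rfloor$ disjoint windows of $m$ consecutive row indices separated by gaps of length $m$, so that the $L$ windows are mutually independent; this partition is the source of the $1/(2m-1)$ coefficient in the threshold. Inside one window the relevant half-normal moment generating functions are $\E[e^{\mu|X|}] = 2e^{m\mu^2/2}\Phi(\mu\sqrt m)$ for $X \sim N(0,m)$ on the outlier side, and $\E[e^{-\mu|g|}] = 2e^{\mu^2/2}(1-\Phi(\mu))$ for $g \sim N(0,1)$ on the clean side (shifted by $\mu \mapsto \mu(1-\delta)$ to absorb the net slack). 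Taking logarithms of these identities is exactly what reproduces the two bracketed expressions appearing in the theorem.

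Finally I would union bound the failure probability. The $\binom{n}{k} \leq \exp\{n[\beta\log(1/\beta)+(1-\beta)\log(1/(1-\beta))]\}$ choices of support give the binary-entropy term; the constant-cardinality net on $\z$ contributes no $n$-dependent exponent; and the per-window Chernoff failure probabilities contribute the remaining two terms, weighted by the outlier-window mass $m\beta$ and the clean-window mass $\tfrac{1}{2m-1}-\beta$. Whenever this total exponent is strictly negative for some admissible $\mu$ and $\delta$---which is precisely the hypothesis in the theorem---the probability that the NSP fails is at most $e^{-c_1 n}$ for some $c_1>0$, and $\x$ is the unique solution to (\ref{eq:errorcorrection}). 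The principal obstacle is the Toeplitz correlation itself: it prevents a direct single-letter Chernoff bound across all $n$ rows and forces the $(2m-1)$-block decoupling, which in turn is responsible for the $1/(2m-1)$ factor and for the $\Phi(\mu\sqrt m)$ (i.e., $N(0,m)$-type) rate function on the outlier side. A secondary delicacy is the rigorous transfer from the $\delta$-net back to the whole sphere $S^{m-1}$, which is what the $1-\delta$ factor is designed to handle.
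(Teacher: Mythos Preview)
Your proposal follows essentially the same route as the paper: reduce to the null-space property (Theorem~\ref{thm:balanced}), discretize the unit sphere with a constant-size net (possible because $m$ is fixed), establish balancedness at each net point via a Chernoff bound that decouples the Toeplitz rows through the $(2m-1)$-block independence structure (this is Lemma~\ref{lemma:blconcentration}, whose detailed proof in the companion note is where the explicit threshold formula with the $m\beta$, $1/(2m-1)$ and $\Phi(\mu\sqrt m)$ terms originates), and then union-bound over supports and net points. The one mechanical difference is the net-to-sphere transfer: the paper (Lemma~\ref{lemma:slp}) expands every unit vector as a geometric series $z=\sum_{j\ge 0}\gamma_j v_j$ of net points and controls each summand through a separate concentration lemma for $\|Hz\|_1$ (Lemma~\ref{cor:s1}), whereas you propose to absorb the discretization directly into the $(1-\delta)$ slack on the clean side---note that in the paper $\delta$ is the balancedness margin of Lemma~\ref{lemma:blconcentration} rather than the net radius, which is a separate parameter $\gamma$ taken arbitrarily small.
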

\textbf{Remark}: The derived correctable fraction of errors $\beta$ depends on the system dimension $m$.
In the rest of this section, we outline the strategy to prove Theorem \ref{thm:strongmain}. Our derivation is based on checking the following now-well-known theorem for $\ell_1$ minimization (see \cite{Yin}, for example).
\begin{theorem}
 (\ref{eq:errorcorrection}) can recover the correct state $\x$ whenever $\|\e\|_0 \leq k$, if and only if for every vector $z \in R^{m}\neq 0$,  $\|(Hz)_K\|_{1} < \|(Hz)_{\overline{K}}\|_{1}$ for every subset $K \subseteq \{1,2,...,n\}$ with cardinality $|K|=k$, where $\overline{K}=\{1,2,...,n\}\setminus K$.
\label{thm:balanced}
\end{theorem}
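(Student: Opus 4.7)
The plan is to prove both implications directly from the definition of an $\ell_1$ minimizer; no probabilistic ingredient is required, since the statement is a deterministic nullspace-style characterization.

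\textbf{Sufficiency ($\Leftarrow$).} Assuming the balancedness condition, I would fix any $\x \in \R^m$ and any $\e$ with $\|\e\|_0 \leq k$, and choose a set $K \subseteq \{1,\ldots,n\}$ of size exactly $k$ containing the support of $\e$, so that $\e$ vanishes off $K$. For an alternative $\x' \neq \x$, set $z = \x - \x' \neq 0$, write $\|\y - H\x'\|_1 = \|Hz + \e\|_1$, and split the sum over $K$ and $\overline{K}$. Applying the reverse triangle inequality termwise on $K$ yields the lower bound $\|\e\|_1 - \|(Hz)_K\|_1 + \|(Hz)_{\overline{K}}\|_1$, which strictly exceeds $\|\e\|_1 = \|\y - H\x\|_1$ by the hypothesis. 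Hence $\x$ strictly beats every competitor and is the unique minimizer.

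\textbf{Necessity ($\Rightarrow$).} Conversely, if the balancedness condition fails, pick a witnessing $z \neq 0$ and $K$ with $|K|=k$ such that $\|(Hz)_K\|_1 \geq \|(Hz)_{\overline{K}}\|_1$. I would exhibit a bad instance by taking the true state $\x = 0$ and defining $\e$ to be supported on $K$ with $e_i = -(Hz)_i$, so that $\|\e\|_0 \leq k$ and $\|\y - H\x\|_1 = \|(Hz)_K\|_1$. The competitor $\x' = -z \neq \x$ then gives residual $\|\e + Hz\|_1 = \|(Hz)_{\overline{K}}\|_1 \leq \|(Hz)_K\|_1$, so $\x$ is not the unique $\ell_1$ minimizer and recovery fails on this $\x,\e$ pair.

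Neither step is substantively difficult. The two small bookkeeping points I expect to flag explicitly are: (a) in the sufficiency direction, the hypothesis is stated only for $|K|=k$, so one must enlarge the support of $\e$ to a set $K$ of exact cardinality $k$ rather than work with $\mathrm{supp}(\e)$ directly; and (b) in the necessity direction, the competitor $\x' = -z$ need not strictly beat $\x$---a tie is possible when $\|(Hz)_K\|_1 = \|(Hz)_{\overline{K}}\|_1$---but a tie already contradicts \emph{unique} recovery of $\x$, so no separate strict/weak case analysis is needed.
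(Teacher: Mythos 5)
Your proof is correct, and both directions are the standard argument: the paper itself does not prove Theorem \ref{thm:balanced} but simply cites \cite{Yin}, where essentially this same reverse-triangle-inequality / explicit-counterexample reasoning appears. Your two bookkeeping remarks are exactly the right ones to make explicit, and note also that the hypothesis applied to any $z\neq 0$ with $Hz=0$ would read $0<0$, so the balancedness condition implicitly forces $H$ to have full column rank, which is what makes the sufficiency direction give a \emph{strict} win for $\x$ over every competitor.
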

The difficulty of checking this condition is that the elements of $H$ are not independent random variables and that the condition must hold for every vector in the subspace generated by $H$. We adopt the following strategy of discretizing the subspace generated by $H$,see \cite{DMTSTOC,StojnicXuHassibi,Ledoux01}.
%
It is obvious that we only need to consider $Hz$ for $z \in R^{m}$ with $\|z\|_2=1$. We then pick a finite set $V=\{v_1, ..., v_{N}\}$ called $\gamma$-net on $\{z|\|z\|_2=1\}$ for a constant $\gamma>0$: in a $\gamma$-net, for every point $z$ from $\{z|\|z\|_2=1\}$, there is a $v_l \in V$ such that $\|z-v_l\|_2 \leq \gamma$. We subsequently establish the property in Theorem \ref{thm:balanced} for all the points in $\gamma$-net $V$ before extending the results to every point $Hz$, where $\|z\|_2=1$.

Following this strategy, we establish Lemmas \ref{cor:s1}, \ref{lemma:blconcentration} and \ref{lemma:slp}. Lemma \ref{lemma:slp} then directly implies Theorem \ref{thm:strongmain}. Most proofs are listed in \cite{Arxiv} for the sake of space. We first show the concentration of measure phenomenon for $Hz$, where $z \in R^m$ is a single vector with $\|z\|_2=1$.
\begin{lemma}\label{cor:s1}
Let $\|z\|_2=1$. For any $\epsilon >0$, there exists a constant $c_2>0$ such that
when $n$ is large enough, with probability $1-2e^{-c_2\frac{n^2}{(n+m-1)m}}$, it holds
that $(1-\epsilon)S \leq \|Hz\|_1\leq (1+\epsilon)S$, where $S=nE\{|X|\}$ and $X$ is a random variable following the Gaussian distribution $N(0,1)$.
\end{lemma}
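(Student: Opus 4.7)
The plan is to view $\|Hz\|_1$ as a Lipschitz function of the underlying Gaussian vector $h = (h_{-m+2},\dots,h_n) \in \R^{n+m-1}$ and then invoke the Tsirelson--Ibragimov--Sudakov Gaussian concentration inequality. Identifying the mean is immediate: since $(Hz)_i = \sum_{j=1}^m z_j h_{-m+i+j}$ and $\|z\|_2 = 1$, each entry $(Hz)_i$ is $N(0,1)$, so $\E|(Hz)_i| = \E|X|$ and $\E\|Hz\|_1 = nE\{|X|\} = S$.

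The substance of the argument is controlling the Lipschitz constant. I would write $Hz = Mh$ where $M \in \R^{n\times(n+m-1)}$ has its $i$th row equal to $(z_1,\dots,z_m)$ appropriately shifted and zero-padded. Each row of $M$ has unit $\ell_2$ norm, but the crude Frobenius bound $\|M\|_{\mathrm{op}} \le \|M\|_F = \sqrt{n}$ is far too weak. Instead, note that $MM^T$ is a symmetric Toeplitz matrix of bandwidth $m-1$ whose $d$th off-diagonal entry equals $r_d = \sum_l z_l z_{l+d}$; Cauchy--Schwarz gives $|r_d|\le \|z\|_2^2 = 1$, so a Gerschgorin / max-absolute-row-sum bound yields $\|MM^T\|_{\mathrm{op}} \le 2m-1$. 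Combining this with $\|\cdot\|_1 \le \sqrt{n}\,\|\cdot\|_2$, one obtains
\begin{equation*}
\bigl|\|Mh\|_1 - \|Mh'\|_1\bigr| \;\le\; \sqrt{n}\,\|M(h-h')\|_2 \;\le\; \sqrt{n(2m-1)}\,\|h-h'\|_2,
\end{equation*}
so $f(h) := \|Hz\|_1$ is Lipschitz in $h$ with constant at most $L = \sqrt{n(2m-1)}$.

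Applying the Gaussian concentration inequality to $f$ with deviation $t = \epsilon S$ then yields
\begin{equation*}
\prob\!\left(\bigl|\|Hz\|_1 - S\bigr| \ge \epsilon S\right) \;\le\; 2\exp\!\left(-\frac{\epsilon^2 n\,(\E|X|)^2}{2(2m-1)}\right),
\end{equation*}
which for fixed $m$ and large $n$ is of the same order as the claimed bound $2e^{-c_2 n^2/((n+m-1)m)}$ after renaming the constant $c_2$. The main obstacle I anticipate is precisely this Lipschitz-constant calculation: the naive Frobenius or row-norm estimate would give a Lipschitz constant of order $n$ rather than $\sqrt{nm}$, collapsing the tail bound to something trivial. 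The entire gain comes from exploiting the banded Toeplitz structure of $MM^T$ so that $\|M\|_{\mathrm{op}}$ scales like $\sqrt{m}$, independent of $n$, which is what makes the exponent in the concentration bound grow linearly in $n$.
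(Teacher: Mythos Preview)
Your argument is correct. The paper itself omits this proof (deferring to the arXiv companion), but the precise form of the exponent, $c_2\,n^2/((n+m-1)m)$, makes it clear that the intended route is exactly Gaussian concentration for a Lipschitz function of $h\in\R^{n+m-1}$, just as you do. The only visible difference is the Lipschitz estimate: the exponent in the statement corresponds to $L\le \sqrt{m(n+m-1)}$, which one obtains from
\[
\|M(h-h')\|_1 \;\le\; \sum_{j=1}^m |z_j|\sum_{i=1}^{n}\bigl|(h-h')_{-m+i+j}\bigr| \;\le\; \|z\|_1\,\|h-h'\|_1 \;\le\; \sqrt{m}\,\sqrt{n+m-1}\,\|h-h'\|_2,
\]
whereas you bound $\|M\|_{\mathrm{op}}$ via the banded Toeplitz structure of $MM^T$ to get $L\le\sqrt{n(2m-1)}$. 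For fixed $m$ both give an exponent of order $n/m$, so after absorbing constants into $c_2$ the conclusions coincide; your spectral bound is a clean alternative and makes the independence-of-$n$ in $\|M\|_{\mathrm{op}}$ explicit.
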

\begin{lemma}
Let $\|z\|_2=1$ and $0<\delta<1$ be a constant. Then there exists a threshold $\beta \in (0,1)$ and a constant $c_3>0$ (depending on $m$ and $\beta$), such that, with a probability $1-e^{-c_3n}$, for all subsets $K\subseteq \{1,2,...,n\}$ with cardinality $\frac{|K|}{n}\leq \beta$,
\begin{equation*}
\|(Hz)_K\|_1 \leq \frac{1-\delta}{2-\delta} \|Hz\|_1.
\end{equation*}
\label{lemma:blconcentration}
\end{lemma}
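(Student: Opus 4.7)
The plan is to reduce the uniform-in-$K$ statement to a single scalar concentration, and to bypass the Toeplitz correlations by slicing $\{1,\ldots,n\}$ into residue classes modulo $m$ on which $Hz$ has i.i.d.\ entries. Fix a threshold $\tau>0$ to be chosen later. From the elementary inequality $|a|\le \tau+(|a|-\tau)_+$, applied coordinate-wise and summed over any $K$ with $|K|=k\le \beta n$, one obtains
\begin{equation*}
\|(Hz)_K\|_1\;\le\;k\tau+\sum_{i=1}^{n}(|(Hz)_i|-\tau)_+,
\end{equation*}
an upper bound that no longer depends on $K$. It therefore suffices to concentrate the scalar $T:=\sum_i(|(Hz)_i|-\tau)_+$ around its mean $n\nu(\tau)$, where $\nu(\tau):=\E[(|X|-\tau)_+]$ for $X\sim N(0,1)$, and to combine this with the lower bound $\|Hz\|_1\ge(1-\epsilon)\,n\,\E|X|$ supplied by Lemma \ref{cor:s1}.

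The main obstacle is the correlation: $(Hz)_i$ and $(Hz)_j$ share entries of the input sequence $\{h_\ell\}$ whenever $|i-j|<m$, so the coordinates of $Hz$ are not i.i.d. We handle this by partitioning $\{1,\ldots,n\}$ into the $m$ residue classes $G_r=\{i:i\equiv r\pmod m\}$. Within each $G_r$ the row-windows of $H$ are pairwise disjoint, hence the variables $\{(Hz)_i\}_{i\in G_r}$ are i.i.d.\ $N(0,1)$ (using $\|z\|_2=1$). Since the truncated variable $(|X|-\tau)_+$ is sub-Gaussian, a Chernoff/Bernstein bound on each class yields, for any $\epsilon>0$,
\begin{equation*}
\prob\Bigl(\sum_{i\in G_r}(|(Hz)_i|-\tau)_+>(1+\epsilon)|G_r|\nu(\tau)\Bigr)\le e^{-c\,|G_r|}.
\end{equation*}
With $|G_r|\ge \lfloor n/m\rfloor$ and $m$ fixed, a union bound over the $m$ residue classes gives $\prob(T>(1+\epsilon)n\nu(\tau))\le m\,e^{-cn/m}=e^{-c_3 n}$.

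Combining these two ingredients, on an event of probability at least $1-e^{-c_3' n}$, the following holds simultaneously for every $K$ with $|K|\le \beta n$:
\begin{equation*}
\frac{\|(Hz)_K\|_1}{\|Hz\|_1}\;\le\;\frac{\beta\tau+(1+\epsilon)\nu(\tau)}{(1-\epsilon)\,\E|X|}.
\end{equation*}
Because $\nu(\tau)$ decays faster than any polynomial in $1/\tau$ as $\tau\to\infty$, one first picks $\tau$ large enough that $(1+\epsilon)\nu(\tau)/\bigl((1-\epsilon)\,\E|X|\bigr)<\tfrac12\cdot\tfrac{1-\delta}{2-\delta}$, and then picks $\beta\in(0,1)$ small enough that $\beta\tau/\bigl((1-\epsilon)\,\E|X|\bigr)$ also falls below $\tfrac12\cdot\tfrac{1-\delta}{2-\delta}$. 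This produces the desired $\beta$ and completes the proof. The argument hinges on the residue-class decomposition: once $m$ is treated as a constant, this step converts a correlated sum into a fixed number of independent sums with only constant loss in the exponential concentration rate, and is the only place where the Toeplitz structure enters non-trivially.
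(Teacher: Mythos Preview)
Your argument is correct, and it takes a genuinely different route from the paper. The paper (as can be read off from the explicit threshold in Theorem~\ref{thm:strongmain}) proceeds by a union bound over all $\binom{n}{\beta n}$ subsets $K$, producing the entropy term $\beta\log(1/\beta)+(1-\beta)\log(1/(1-\beta))$; for each fixed $K$ it applies a Cram\'er--Chernoff bound with tilt parameter $\mu$ to the event $\|(Hz)_K\|_1>\frac{1-\delta}{2-\delta}\|Hz\|_1$, and handles the Toeplitz dependence by thinning $\overline{K}$ to a subset with spacing governed by $2m-1$ so that the retained coordinates are independent of one another and of those in $K$ (hence the factors $m\beta$ and $\tfrac{1}{2m-1}-\beta$ in the exponent). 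By contrast, you eliminate the combinatorial union bound altogether via the soft-threshold inequality $\max_{|K|\le k}\|(Hz)_K\|_1\le k\tau+\sum_i(|(Hz)_i|-\tau)_+$, reduce everything to concentrating a \emph{single} scalar $T$, and treat the correlations by the residue-class decomposition mod $m$. What your approach buys is brevity and transparency: it is shorter, uses only sub-Gaussian concentration of a $1$-Lipschitz functional, and makes clear that the Toeplitz structure costs at most a factor $m$ in the exponent. What the paper's approach buys is an \emph{explicit, optimizable} threshold: by keeping the Cram\'er parameter $\mu$ and the balance between $K$ and $\overline{K}$ in the exponent, one can numerically maximize $\beta$ (as done in Figure~\ref{fig:gaussianfixrho}), whereas your argument certifies existence of some $\beta>0$ without a sharp formula.
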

By a union bound on the size of $\gamma$-net, Lemma \ref{cor:s1} and \ref{lemma:blconcentration} indicate that with overwhelming
probability the recovery condition in Theorem \ref{thm:balanced} holds for the discrete points on $\gamma$-net.
The following lemma extends the result to $\{z|\|z\|_2=1\}$.
\begin{lemma}\label{lemma:slp}
There exist a constant $c_4>0$ such that when $n$ is large enough, with probability $1-e^{-c_4n}$, the Toeplitz matrix $H$ has the following property: for every $z \in R^m$ and every subset $K
\subseteq \{1,...,n\}$ with $|K| \leq \beta n$, $\sum \limits_ {i \in
\overline{K}} |(Hz)_i| - \sum \limits_{i \in K} |(Hz)_i| \geq \delta' S
$, where $\delta'>0$ is a constant.
\end{lemma}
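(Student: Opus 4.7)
The plan is to combine Lemmas \ref{cor:s1} and \ref{lemma:blconcentration} through a $\gamma$-net argument on the unit sphere of $\R^m$, and then transfer the inequality from the net to the whole sphere using a crude bound on the $\ell_2 \to \ell_1$ operator norm of $H$. By homogeneity of the inequality it suffices to treat $z$ with $\|z\|_2 = 1$. Since $m$ is fixed, for any constant $\gamma > 0$ one can choose a $\gamma$-net $V = \{v_1, \ldots, v_N\}$ on $\{z : \|z\|_2 = 1\}$ with $N \leq (3/\gamma)^m = O(1)$.

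First, I would apply Lemmas \ref{cor:s1} and \ref{lemma:blconcentration} to each $v_l \in V$ and take a union bound. Because $N$ is a constant, there exists $c_5 > 0$ and an event $\mathcal{E}_1$ of probability at least $1 - e^{-c_5 n}$ on which, for every $v_l \in V$ simultaneously,
\[
(1-\epsilon)S \leq \|H v_l\|_1 \leq (1+\epsilon)S \qquad \text{and} \qquad \|(H v_l)_K\|_1 \leq \tfrac{1-\delta}{2-\delta}\, \|H v_l\|_1
\]
for every $K$ with $|K| \leq \beta n$. Since $\sum_{i \in \overline{K}} |(H v_l)_i| - \sum_{i \in K} |(H v_l)_i| = \|H v_l\|_1 - 2\|(H v_l)_K\|_1$, this yields the net-level inequality
\[
\sum_{i \in \overline{K}} |(H v_l)_i| - \sum_{i \in K} |(H v_l)_i| \;\geq\; \tfrac{\delta(1-\epsilon)}{2-\delta}\, S.
\]

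Next, I would extend this from the net to an arbitrary unit vector $z$. Writing $u := z - v_l$ with $\|u\|_2 \leq \gamma$, row-wise Cauchy--Schwarz gives $\|Hu\|_1 \leq \gamma \sum_{i=1}^n \|H_{i,:}\|_2$. Each row of $H$ is a Gaussian vector with $m$ i.i.d.\ $N(0,1)$ entries, so $\E \|H_{i,:}\|_2 \leq \sqrt{m}$, and the map $h \mapsto \sum_i \|H_{i,:}\|_2$ is Lipschitz in the ambient Gaussian vector $(h_{-m+2}, \ldots, h_n) \in \R^{n+m-1}$ with constant $O(m\sqrt{n})$, since each coordinate $h_k$ enters at most $m$ rows of $H$. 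Gaussian concentration of measure then provides a constant $c_6 > 0$ and an event $\mathcal{E}_2$ of probability at least $1 - e^{-c_6 n}$ on which $\sum_i \|H_{i,:}\|_2 \leq 2n\sqrt{m}$. On $\mathcal{E}_1 \cap \mathcal{E}_2$, for every unit $z$ and every $K$ with $|K| \leq \beta n$, choosing any $v_l$ with $\|z - v_l\|_2 \leq \gamma$ gives
\[
\sum_{i \in \overline{K}} |(H z)_i| - \sum_{i \in K} |(H z)_i| \;\geq\; \sum_{i \in \overline{K}} |(H v_l)_i| - \sum_{i \in K} |(H v_l)_i| - \|H u\|_1 \;\geq\; \tfrac{\delta(1-\epsilon)}{2-\delta}\, S - 2n\sqrt{m}\, \gamma.
\]
Since $S = n \sqrt{2/\pi} = \Theta(n)$, fixing the constant $\gamma$ small enough (depending only on $m, \delta, \epsilon$) leaves a residue of at least $\delta' S$ on the right for some constant $\delta' > 0$, and any $c_4 < \min(c_5, c_6)$ absorbs the two-term union bound for $n$ large.

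The main obstacle is obtaining the operator-type bound $\|H\|_{2 \to 1} = O(n)$ with a failure probability that is exponentially small in $n$: because the Toeplitz structure forces the rows to share entries, independence-based row-norm tail estimates do not combine directly. The remedy is to apply Gaussian concentration not to the rows individually but to the Lipschitz function $h \mapsto \sum_i \|H_{i,:}\|_2$ on $\R^{n+m-1}$, using the fact that each $h_k$ appears in at most $m$ rows to cap its Lipschitz constant at $O(m\sqrt{n})$; this is just tight enough for the resulting $e^{-\Omega(n/m^2)}$ tail to match the rates in Lemmas \ref{cor:s1} and \ref{lemma:blconcentration} and to survive the $\gamma$-net union bound.
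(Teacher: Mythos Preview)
Your argument is correct, but the extension from the $\gamma$-net to the full sphere follows a genuinely different route than the paper's. The paper never introduces an auxiliary operator-norm event $\mathcal{E}_2$; instead it uses the classical \emph{iterated} net decomposition: given unit $z$, write $z=\sum_{j\ge 0}\gamma_j v_j$ with $v_j\in V$ and $|\gamma_j|\le\gamma^j$, then bound $\sum_{i\in K}|(Hz)_i|\le\sum_{j\ge 0}\gamma^j\|(Hv_j)_K\|_1$ and $\|Hz\|_1\ge\|Hv_0\|_1-\sum_{j\ge 1}\gamma^j\|Hv_j\|_1$, so that only the two conclusions of Lemmas~\ref{cor:s1} and~\ref{lemma:blconcentration} on net points are needed and the perturbation terms collapse as geometric series in $\gamma$. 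Your single-step approximation is conceptually simpler and gives a cleaner final inequality, but it costs an extra concentration result for $\sum_i\|H_{i,:}\|_2$ (which you handle correctly via the Lipschitz bound $\sqrt{mn}$ on $h\in\R^{n+m-1}$); the paper's iterated scheme is more self-contained, reusing only the already-proved lemmas and thus avoiding any separate control of $\|H\|_{2\to 1}$. Either approach is valid here because $m$ is fixed, so the net has constant cardinality and both union bounds go through.
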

\begin{proof}
For any given $\gamma>0$, there exists a $ \gamma$-net $V=\{v_1, ..., v_{N}\}$ of
cardinality less than $(1+\frac{2}{\gamma})^m$\cite{Ledoux01}. Since each row of $H$ has $m$ i.i.d $N(0,1)$ entries, elements of $Hv_j$, $1\leq j\leq N $, are
(not independent) $N(0,1)$ entries. Applying a union bound on the size of $\gamma$-net, Lemmas \ref{lemma:blconcentration}
and \ref{cor:s1} imply that for every $v_j \in V$, for some $\delta>0$ and for any constant
$\epsilon>0$, with probability $1-2e^{-cn}$ for some $c>0$,
\begin{eqnarray*}\label{prop:srho}
&&\|(Hv_j)_K\|_1 \leq
\frac{(1-\delta)(1+\epsilon)}{2-\delta}S\\
&&(1-\epsilon)S \leq \|Hv_j\|_1\leq (1+\epsilon)S
\end{eqnarray*}
hold simultaneously for every vector $v_j$ in $V$.

For any $z$ such that $\|z\|_2=1$, there exists a point $v_0$ (we change the subscript numbering for $V$ to index the order) in $V$ such
that $\|z-v_0\|_2\triangleq \gamma_1 \leq \gamma$. Let $z_1$ denote $z-v_0$,
 then $\|z_1-\gamma_1v_1\|_2 \triangleq \gamma_2 \leq \gamma_1 \gamma \leq \gamma^2$ for
 some $v_1$ in $V$. Repeating this process, we have $z=\sum_{j\geq 0} \gamma_j v_j$, where $\gamma_0=1$, $\gamma_j \leq \gamma^j$ and $v_j \in V$.

Thus for any $z \in R^m$, $z=\|z\|_2\sum_{j\geq 0} \gamma_j v_j$. For any index set $K$ with $|K| \leq \beta n$,
\begin{eqnarray*}
\sum \limits_{i \in K} |(Hz)_i| &=& \|z\|_2 \sum \limits_{i \in K} |(\sum \limits_{j \geq 0} \gamma_j Hv_j)_i| \\
& \leq & \|z\|_2 \sum \limits_{i \in K} \sum \limits_{j \geq 0} \gamma^{j} |(Hv_j)_i| \\
&= & \|z\|_2  \sum \limits_{j \geq 0} \gamma^{j} \sum \limits_{i \in K} |(Hv_j)_i| \\
& \leq & S\|z\|_2 \frac{(1-\delta)(1+\epsilon)}{(2-\delta){(1-\gamma)}}
\end{eqnarray*}

\begin{eqnarray*}
\sum \limits_{i} |(Hz)_i| &=& \|z\|_2 \sum \limits_{i } |(\sum \limits_{j \geq 0} \gamma_{j} Hv_j)_i| \\
& \geq & \|z\|_2 \sum \limits_{i}(|(Hv_0)_i|- \sum \limits_{j \geq 1} \gamma_{j} |(Hv_j)_i|) \\
& \geq & \|z\|_2 (\sum \limits_{i} |(Hv_0)_i|- \sum \limits_{j \geq 1} \gamma^{j} \sum \limits_{i}|(Hv_j)_i|) \\
& \geq & \|z\|_2 ((1-\epsilon)S- \sum \limits_{j \geq 1} \gamma^{j} (1+\epsilon)S) \\
& \geq &  S\|z\|_2 (1-\epsilon- \frac{\gamma(1+\epsilon)}{1-\gamma}).
\end{eqnarray*}

So $\sum \limits_ {i \in \overline{K}} |(Hz)_i| - \sum \limits_{i \in K}
|(Hz)_i| \geq S \|z\|_2
( 1-\epsilon- \frac{\gamma(1+\epsilon)}{1-\gamma}-2\frac{(1-\delta)(1+\epsilon)}{(2-\delta){(1-\gamma)}})$.
For a given $\delta$, we can pick $\gamma$ and $\epsilon$ small
enough such that $\sum \limits_ {i \in \overline{K}} |(Hz)_i| - \sum
\limits_{i \in K} |(Hz)_i| \geq \delta' S \|z\|_2$, satisfying the condition in Theorem \ref{thm:balanced}.
\end{proof}

If we do not require $\ell_1$ minimization to correct $k$ outliers over different supports, the fraction of outliers that are correctable can go to $1$.   
\begin{theorem}
 Take an arbitrary constant $0<\beta<1$ and let $\y=H\x+\e$, where $H$ is a Toeplitz matrix with Gaussian elements as defined earlier and $\e$ is \emph{a} vector with $k=\beta n$ nonzero elements. When $n \rightarrow \infty$, $\x$ can be recovered perfectly using $\ell_1$ minimization from $\e$ with $k\leq\beta n$ sparse errors with high probability.
\end{theorem}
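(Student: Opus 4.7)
The plan is to exploit the crucial difference from Theorem~\ref{thm:strongmain}: here $\e$ is a single fixed vector, so the support $K=\{i:\e_i\neq 0\}$ and the signs $s_i=\mathrm{sign}(\e_i)$ on $K$ are fixed in advance and independent of $H$. This allows replacing the symmetric condition in Theorem~\ref{thm:balanced} by the weaker sign-based version: a one-sided directional-derivative calculation on the convex function $t\mapsto\|\e+tHz\|_1$ at $t=0$ shows that, for $\x$ to be the unique minimizer of (\ref{eq:errorcorrection}), it suffices that for every $z\in R^m\setminus\{0\}$,
\begin{equation*}
\sum_{i\in\overline K}|(Hz)_i|>-\sum_{i\in K}s_i(Hz)_i.
\end{equation*}
The right-hand side is now a \emph{signed} sum rather than an absolute-value sum; since the signs $s_i$ do not depend on $H$, cancellation should shrink it from $\Theta(n)$ to $O(\sqrt n)$, and this is what permits $\beta$ to be pushed all the way to $1$.

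Following the $\gamma$-net strategy of Section~\ref{sec:strongthreshold}, by $1$-homogeneity I would reduce to $\|z\|_2=1$ and pick a $\gamma$-net $V\subset\{z:\|z\|_2=1\}$ of cardinality $|V|\le(1+2/\gamma)^m=O(1)$; because $m$ is fixed, the union bound over $V$ is essentially free. For each $v\in V$ I would control two terms. First, $T_1(v):=\sum_{i\in K}s_i(Hv)_i$ is a mean-zero Gaussian: regrouping along Toeplitz diagonals gives $T_1(v)=\sum_t h_t\,a_t(v)$, where $a_t(v)=\sum_{(i,j):\,i\in K,\,i+j-m=t}s_iv_j$. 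Each diagonal contributes at most $m$ pairs, so by Cauchy--Schwarz $a_t(v)^2\le m\sum_{(i,j):\,i\in K,\,i+j-m=t}v_j^2$, and summing in $t$ yields $\mathrm{Var}(T_1(v))=\sum_t a_t(v)^2\le m k\|v\|_2^2=m\beta n$. The Gaussian tail then gives $|T_1(v)|\le n^{3/4}$ with probability $1-2e^{-\Omega(\sqrt n)}$. Second, $T_2(v):=\sum_{i\in\overline K}|(Hv)_i|$ has mean $(1-\beta)nE\{|X|\}$, and since $T_2$ is a $\sqrt n$-Lipschitz function of the Gaussian vector $(h_t)$, standard Gaussian concentration gives $T_2(v)\ge(1-\beta)(1-\epsilon)nE\{|X|\}$ with probability $1-e^{-\Omega(n)}$ for any fixed $\epsilon>0$.

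A union bound over the constant-size net $V$ then gives $T_2(v)-|T_1(v)|\ge(1-\beta)(1-\epsilon)nE\{|X|\}-n^{3/4}=\Omega(n)$ simultaneously for every $v\in V$. The Lipschitz perturbation argument already used in the proof of Lemma~\ref{lemma:slp}---writing $z=\sum_{j\ge 0}\gamma_j v_j$ with $v_j\in V$ and $|\gamma_j|\le\gamma^j$, then summing geometrically---transfers the inequality from $V$ to every unit $z$ at only a harmless constant cost. Since $\beta$ enters the lower bound only through the factor $1-\beta$ on the dominant term $T_2$ while $T_1$ is $o(n)$ regardless of $\beta$, the margin is strictly positive for every $\beta<1$, which proves the claim. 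The main obstacle is the variance estimate for $T_1(v)$: the Toeplitz correlations couple rows of $H$ and block any direct independent-sum bound, so it is essential that the diagonal regrouping $\sum_t h_t a_t(v)$ turns the correlated signed sum into a sum of $n+m-1$ \emph{independent} Gaussians whose coefficient energy is controlled by the Toeplitz-locality factor $m$.
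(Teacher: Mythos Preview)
Your argument is sound. The key insight---replacing the balanced condition of Theorem~\ref{thm:balanced} by the sign-restricted directional-derivative condition $\sum_{i\in\overline K}|(Hz)_i|>\bigl|\sum_{i\in K}s_i(Hz)_i\bigr|$, so that the $K$-term becomes a mean-zero Gaussian of variance $O(n)$ rather than a sum of absolutes of order $\Theta(n)$---is exactly what is needed to push $\beta$ to any value below $1$. Your diagonal regrouping $T_1(v)=\sum_t h_t\,a_t(v)$ and the Cauchy--Schwarz bound $\sum_t a_t(v)^2\le mk$ are correct and handle the Toeplitz correlations cleanly. Two small points to tighten: the Lipschitz constant of $T_2$ in the Gaussian vector $(h_t)$ is $\sqrt{mn}$ rather than $\sqrt n$ (harmless since $m$ is fixed), and the net-to-sphere extension of the lower bound on $T_2(z)$ needs the uniform \emph{upper} bound $\|Hv_j\|_1\le(1+\epsilon)S$ from Lemma~\ref{cor:s1} to control the geometric tail $\sum_{j\ge1}\gamma^j\|Hv_j\|_1$, which you use implicitly but should state.

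As for comparison: the paper does not include a proof of this theorem in the text (it defers details to \cite{Arxiv}), so no line-by-line comparison is possible here. Your approach is the natural ``weak threshold'' argument and is consistent with the machinery the paper sets up in Lemmas~\ref{cor:s1}--\ref{lemma:slp}; in particular the linearity of $T_1$ in $z$ makes the net extension even simpler than in Lemma~\ref{lemma:slp}, since no triangle inequality is needed on the $K$-part.
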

%
%

%
\section{With Both Outliers and Observation Noises}
\label{sec:noiseerrortogether}
We further consider Toeplitz matrix based system identification when  both outliers and
random observation errors are present, namely, the observation $\y=H\x+\e+\w$,
where $\e$ is a sparse error with no more than $k$ nonzero elements and $\w$ is the vector of additive observation noises. We can show that error $\|\hat{\x}-\x\|_2$ goes to $0$ even when there are both outliers and random observation errors under mild conditions, where $\hat{\x}$ is the solution to (\ref{eq:errorcorrection}). 
\begin{theorem}
\label{thm:noiseerrortogether}
Let $m$ be a fixed positive integer and $H$ be an $n \times m$ Toeplitz matrix ($m<n$)
in (\ref{Top}) with each element $h_i$, $-m+2\leq i \leq n$, being i.i.d. $N(0,1)$ Gaussian random variables. Suppose $\y=H\x+\e+\w$, where $\e$ is a sparse
vector with $k \leq \beta n$ nonzero elements ($\beta<1$ is a constant) and $\w$ is
the observation noise vector. For any constant $t>0$, we assume that, with high probability as $n \rightarrow \infty$,  at least $\alpha(t)n$ (where $\alpha(t)>0$ is a constant depending on $t$ ) elements in $\w+\e$ are no bigger than $t$ in amplitude. Then $\|\hat{\x}-\x\|_2\rightarrow 0$ with high probability as $n \rightarrow \infty$, where $\hat{\x}$ is the solution to (\ref{eq:errorcorrection}).
\end{theorem}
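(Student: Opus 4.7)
Let $\bfv=\hat{\x}-\x$ and $\bfd=H\bfv$, so that the optimality of $\hat{\x}$ for (\ref{eq:errorcorrection}) reads $\|(\e+\w)-\bfd\|_1\le\|\e+\w\|_1$. My plan is to turn this into an upper bound of the form $\|\bfv\|_2\le C(t)$ for each fixed $t>0$, and then drive $t\downarrow 0$. Introduce the small-residual index set $L=\{i:|(\e+\w)_i|\le t\}$, which by hypothesis has $|L|\ge\alpha(t)n$ with high probability. The reverse triangle inequality produces $|(\e+\w-\bfd)_i|\ge|\bfd_i|-t$ on $L$ and $|(\e+\w-\bfd)_i|\ge|(\e+\w)_i|-|\bfd_i|$ on $L^c$; summing, combining with $\|\e+\w\|_1\le t|L|+\sum_{i\in L^c}|(\e+\w)_i|$, and cancelling the common $\sum_{i\in L^c}|(\e+\w)_i|$ term yields the key estimate
\begin{equation*}
\sum_{i\in L}|\bfd_i|-\sum_{i\in L^c}|\bfd_i|\;\le\;2t|L|\;\le\;2tn.
\end{equation*}

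Next I match this by a lower bound of the form $\delta''(t)\,S\,\|\bfv\|_2$. Because $L$ is a deterministic function of $(\e,\w)$, it is independent of the Toeplitz entries; conditioning on $L$, I may reuse the $\gamma$-net strategy underlying Lemmas~\ref{cor:s1}--\ref{lemma:slp}, but applied to the \emph{single} (random) support $L^c$ rather than uniformly over all supports of a given size. Since $|L^c|\le(1-\alpha(t))n$ is a constant fraction strictly below $1$, this is exactly the regime handled by the weak-threshold recovery theorem stated immediately after Lemma~\ref{lemma:slp}, which I need to invoke in a quantitative form. The output is a constant $\delta''(t)>0$ such that, with high probability,
\begin{equation*}
\sum_{i\in L}|(Hz)_i|-\sum_{i\in L^c}|(Hz)_i|\;\ge\;\delta''(t)\,S\,\|z\|_2 \quad \forall\, z\in R^m.
\end{equation*}
Specializing to $z=\bfv$ and combining with the key estimate gives $\delta''(t)\,S\,\|\bfv\|_2\le 2tn$, and since $S=n\sqrt{2/\pi}$ the right-hand side simplifies to $C(t)=2t\sqrt{\pi/2}/\delta''(t)$, independent of $n$. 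Letting $t\downarrow 0$ then yields $\|\bfv\|_2\to 0$ in probability.

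The principal obstacle is the lower-bound step when $\alpha(t)$ is small, so that $L^c$ is nearly the entire index set. This is beyond the explicit sparsity threshold of Theorem~\ref{thm:strongmain}, and the argument must lean on the \emph{fixed-support} version of the Theorem~\ref{thm:balanced} condition, where only one union bound (over the $\gamma$-net) is required instead of two. A secondary subtlety is that $\delta''(t)$ degrades as $t\downarrow 0$ because the admissible sparsity fraction $1-\alpha(t)$ approaches $1$; one has to verify that $t/\delta''(t)$ can nevertheless be made arbitrarily small along a sequence $t_k\downarrow 0$, and this is exactly where the hypothesis that $\alpha(t)>0$ for \emph{every} $t>0$, together with explicit $\gamma$-net estimates for the gap constant, enters.
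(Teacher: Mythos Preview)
Your reduction to the inequality $\sum_{i\in L}|\bfd_i|-\sum_{i\in L^c}|\bfd_i|\le 2tn$ is clean and correct, but the matching lower bound you propose cannot be obtained in the regime the theorem must cover. The hypothesis only guarantees $\alpha(t)>0$; it does not preclude $\alpha(t)<\tfrac12$, and for small $t$ (or large $\beta$) one typically has $|L^c|>|L|$. In that case, for any fixed unit vector $z$, each $(Hz)_i$ is marginally $N(0,1)$, so
\[
E\Big\{\sum_{i\in L}|(Hz)_i|-\sum_{i\in L^c}|(Hz)_i|\Big\}=(|L|-|L^c|)\sqrt{2/\pi}<0,
\]
and concentration forces this quantity to be negative with high probability. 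Thus no positive $\delta''(t)$ exists once $\alpha(t)<\tfrac12$: the issue is not that $\delta''(t)$ degrades slowly, but that it ceases to be positive at all. The weak-threshold theorem you invoke does not help here: it concerns a fixed error \emph{vector} $\e$, and the relevant condition $\|\e+Hz\|_1>\|\e\|_1$ depends on the signs of $\e$, not merely on its support. It does not imply the support-only balancedness $\|(Hz)_{L^c}\|_1<\|(Hz)_L\|_1$ that your lower bound needs, and that inequality is simply false for $|L^c|>n/2$.

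The paper avoids this by never splitting via the triangle inequality. It works directly with $g(h,t)=\sum_i(|l_i+t(Hz)_i|-|l_i|)$, $l_i=(\e+\w)_i$, and observes that $E\{|l+tX|-|l|\}\ge 0$ for \emph{every} real $l$ (Jensen, since $x\mapsto|l+tx|$ is convex and $E[X]=0$). So indices with large $|l_i|$ never subtract from the expectation; they merely contribute little. An explicit computation gives a contribution of at least $0.1666\,t$ whenever $|l_i|\le t$, hence $E\{g(h,t)\}\ge 0.1666\,t\,\alpha(t)\,n$. Gaussian concentration (the Lipschitz constant of $g$ in $h$ is $O(t\sqrt{mn})$) then yields $g(h,t)>0$ with probability $1-e^{-\Theta(n)}$, uniformly over a $\gamma$-net and hence over the whole sphere. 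Finally, convexity of $t\mapsto g(h,t)$ with $g(h,0)=0$ propagates this to all $\|\x-\hat{\x}\|_2\ge t$, giving $\|\hat{\x}-\x\|_2<t$ directly, with no ratio $t/\delta''(t)$ to control.
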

\begin{proof}
$\|\y-H\hat{\x}\|_1$ can be written as $\|H(\x-\hat{\x})+\e+\w\|_1$. We argue that for any constant $t>0$, with high probability as $n \rightarrow 0$, for all $\hat{\x}$ such that $\|\x-\hat{\x}\|=t$, $\|H(\x-\hat{\x})+\e+\w\|_1 > \|\e+\w\|_1$, contradicting to $\hat{\x}$ being the solution to (\ref{eq:errorcorrection}).

To see this, we cover the sphere $Z=\{z| \|z\|_2=1\}$ with a $\gamma$-net $V$. We first argue that for every discrete point $tv_j$ with $v_j$ from the $\gamma$-net, $\|H tv_j+\e+\w\|_1 > \|\e+\w\|_1$; and then extend the result to the set $tZ$.

Let us denote $g(h,t)=\|H tv_j+\e+\w\|_1 - \|\e+\w\|_1=\sum_{i=1}^{n}(|l_i+t (Hv_j)_i|-|l_i|)$, where $l_i=(\e+\w)_i$ for $1\leq i \leq n$. We note that $(Hv_j)_i$ is a Gaussian random variable $N(0,1)$. Let $X$ be a Gaussian random variable $N(0,\sigma^2)$, then for an arbitrary $l$,
\begin{eqnarray*}
&&E\left\{ {|l+tX|}-|l| \right\}\\
&=&\frac{2}{\sqrt{2\pi}t\sigma}\int_{0}^{\infty} x e^{-\frac{(|l|+x)^2}{2t^2\sigma^2}}\,dx\\
&=& \sqrt{\frac{2}{\pi}}t\sigma e^{-\frac{l^2}{2t^2\sigma^2}}-2|l|(1-\Phi(\frac{|l|}{t\sigma})),
\end{eqnarray*}
which is a decreasing nonnegative function in $|l|$. From this, $E\{g(h,t)\}=\sum_{i=1}^{n}(\sqrt{\frac{2}{\pi}}t e^{-\frac{|l_i|^2}{2t^2}}-2|l_i|(1-\Phi(\frac{|l_i|}{t})))$. When $|l|\leq t$ and $\sigma=1$, $E\left\{ {|l+tX|}-|l| \right\}=\sqrt{\frac{2}{\pi}}t e^{-\frac{1}{2}}-2|l|(1-\Phi(1))\geq 0.1666t$.  It is also not hard to verify that $|g(a,t)-g(b,t)|\leq \sum_{i=1}^{n}t \sqrt{m}|a_i-b_i| \leq t\sqrt{mn}\|a_i-b_i\|_2$, and $g(h,t)$ has a Lipschitz constant (for $h$) no bigger than than $t\sqrt{mn}$.

Then by concentration of measure phenomenon for Gaussian random variables (see \cite{LedouxTalagrand1991, Ledoux01}),
\begin{eqnarray*}
&&P(g(h,t)\leq 0)\\
&=& P(\frac{g(h,t)-E\{g(h,t)\}}{t\sqrt{mn}} \leq -\frac{E\{g(h,t)\}}{t\sqrt{mn}} )\\
&\leq& 2e^{-\frac{ \left(\sum_{i=1}^{n}\left[\sqrt{\frac{2}{\pi}}t e^{-\frac{l_i^2}{2t^2}}-2|l_i|(1-\Phi(\frac{|l_i|}{t}))\right]\right)^2  }{2t^2 nm}}\triangleq2e^{-B}.
\end{eqnarray*}
If there exists a constant $\alpha (t)$ such that, as $n \rightarrow \infty$, at least $\alpha(t) n$ elements have magnitudes smaller than $t$, then the numerator in $B$ behaves as $\Theta(n^2)$ and the corresponding probability $P(g(h,t)\leq 0)$ behaves as $2e^{-\Theta( n)}$. This is because when $|l|\leq t$, $\sqrt{\frac{2}{\pi}}t e^{-\frac{|l|^2}{2t^2}}-2|l|(1-\Phi(\frac{|l|}{t})) \geq 0.1666 t$.

By the same reasoning, $g(h,t)\leq \epsilon n$ holds with probability no more than $e^{-c_5 n}$ for each discrete point from the $\gamma$-net $tV$, where $\epsilon>0$ is a sufficiently small constant and $c_5>0$ is a constant which may depend on $\epsilon$.  Since there are at most $(1+\frac{2}{\gamma})^{m}$ points from the $\gamma$-net, by a simple union bound, with probability $1-e^{-c_6 n}$ as $n \rightarrow \infty$, $g(h,t)> \epsilon n$ holds for \emph{all} points from the $\gamma$-net $tV$, where $c_6>0$ is a constant and $\gamma$ can be taken as an arbitrarily small \emph{constant}. Following similar $\gamma$-net proof techniques for Lemmas \ref{cor:s1}, \ref{lemma:blconcentration} and \ref{lemma:slp}, if we choose a sufficiently small constant $\epsilon>0$ and accordingly a sufficiently small constant $\gamma>0$, $g(h,t)> 0.5 \epsilon n$ holds simultaneously for every point in the set $tZ$ with high probability $1-e^{-c_7 n}$, where $c_7>0$ is a constant.

Notice if $g(h, t)>0$ for $t=t_1$, then necessarily $g(h,t)>0$ for $t=t_2>t_1$. This is because $g(h,t)$ is a convex function in $t\geq 0$ and $g(h,0)=0$. So if $g(h,t)> 0.5 \epsilon n>0$ holds with high probability for every point $tZ$, necessarily $\|\hat{\x}-\x\|_2<t$, because $\hat{\x}$ minimizes the objective in (\ref{eq:errorcorrection}). Because we can pick $t$ to be arbitrarily small, $\|\hat{\x}-\x\|_2\rightarrow 0$ with high probability as $n \rightarrow \infty$.
\end{proof}

We remark that the mild conditions in Theorem \ref{thm:noiseerrortogether} are satisfied easily if $\beta<1$ and the elements in $\w$ are i.i.d. random variables following a probability density function $f(s)$ that is not ``vanishing'' around $s=0$ (namely the cumulative distribution function $F(t)>0$ for any $t>0$. $f(0)$ can be $0$ sometimes). For example, Gaussian distribution, exponential distributions, and Gamma distributions for $\w$ all satisfy such conditions in Theorem \ref{thm:noiseerrortogether}. This greatly broadens the results in \cite{SIC}, which requires $f(0)>0$ and does not accommodate outliers. Compared with analysis in compressed sensing \cite{CandesErrorCorrection, DonohoNoise}, this result is for Toeplitz matrix in error correction and applies to observation noises with non-Gaussian distributions.

\section{Numerical Evaluations}
\label{sec:numerical}
Based on Theorem \ref{thm:strongmain}, we calculate the strong thresholds in Figure \ref{fig:gaussianfixrho} for different values of $m$ by optimizing over $\mu>0$ and $\delta$. As $m$ increases, the correlation length in the matrix $H$ also increases and the corresponding correctable number of errors decreases (but always exists).
\begin{figure}[t]
\centering
\includegraphics[width=2.5in, height=1.2in]{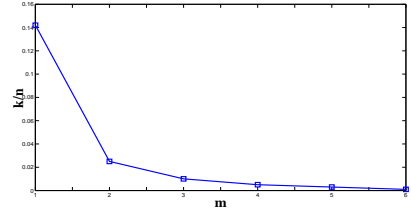}
\caption{Recoverable fraction of errors versus $m$}\label{fig:gaussianfixrho}
\end{figure}
\begin{figure}[t]
\centering
\includegraphics[width=2.5in, height=1.5in]{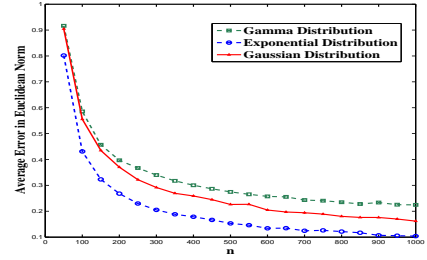}
\caption{With outliers and noises of different distributions}
\label{fig:together}
\end{figure}
We then evaluate in Figure \ref{fig:together} the $\ell_2$-norm error $\|\hat{\x}-\x\|_2$ of $\ell_1$ minimization for Gaussian Toeplitz matrices under both outliers and i.i.d. observation noises of different probability distributions: Gamma distribution with shape parameter $k=2$ and scale $\frac{1}{\sqrt{6}}$; standard Gaussian distribution $N(0,1)$ and exponential distribution with mean $\frac{\sqrt{2}}{2}$. These distributions are chosen such that the observation noises have the same expected energy. The system parameter $m$ is set to $5$ and the system state $\x$ are generated as i.i.d. standard Gaussian random variables. We randomly pick $\frac{n}{2}$ i.i.d. $N(0,100)$ Gaussian outliers with random support for the error vector $\e$. For all these distributions, the average error goes to $0$ (we also verified points beyond $n>1000$). What is interesting is that the error goes to $0$ at different rates. Actually, as hinted by the proof of Theorem \ref{thm:noiseerrortogether}, the Gamma distribution has the worst performance because its probability density function is smaller around the origin (actually $0$ at the origin), while the exponential distribution has the largest probability density function around $0$.

\bibliographystyle{IEEEbib}

\end{document}